\documentclass{article}
\usepackage{graphicx}
\usepackage{amsmath, amsthm, latexsym}
\usepackage{amssymb}
\pagenumbering{arabic}
\newtheorem{theorem}{Theorem}
\newtheorem{lemma}{Lemma}

\newtheorem{corollary}{Corollary}

\newtheorem{remark}{Remark}
\usepackage{color}

\begin{document}

\vspace*{3cm} \thispagestyle{empty}
\vspace{5mm}

\noindent \textbf{\Large Exact Fermi coordinates for a class of spacetimes}\\

\noindent  \textbf{\normalsize David Klein}\footnote{Department of Mathematics, California State University, Northridge, Northridge, CA 91330-8313. Email: david.klein@csun.edu.}
\textbf{\normalsize and Peter Collas}\footnote{Department of Physics and Astronomy, California State University, Northridge, Northridge, CA 91330-8268. Email: peter.collas@csun.edu.}\\

\vspace{4mm} \parbox{11cm}{\noindent{\small We find exact Fermi coordinates for timelike geodesic observers for a class of spacetimes that includes anti-de Sitter spacetime, de Sitter spacetime, the constant density interior Schwarzschild spacetime with positive, zero, and negative cosmological constant, and the Einstein static universe.  Maximal charts for Fermi coordinates are discussed.}\vspace{5mm}\\
\noindent {\small KEY WORDS: Fermi coordinates, de Sitter space, anti-de Sitter space, interior Schwarzschild metric, Einstein static universe, cosmological constant, Gaussian curvature, sectional curvature, Jacobi field}\\

\noindent MSC numbers: 83C15, 83C10\\
\noindent PACS numbers: 04.20.Cv, 04.20.Jb, 04.20.-q}\\
\vspace{6cm}
\pagebreak

\setlength{\textwidth}{27pc}
\setlength{\textheight}{43pc}
\noindent \textbf{{\normalsize 1. Introduction}}\\

\noindent The effects of a gravitational field are most naturally analyzed by using a system of locally inertial coordinates.  For an observer following a timelike path, Fermi-Walker coordinates provide such a system. A Fermi-Walker coordinate frame is nonrotating in the sense of Newtonian mechanics and is realized physically as a system of gyroscopes \cite{walker, synge, MTW, CK2}.   Applications of these coordinate systems are extensive and include the study of tidal dynamics, gravitational waves, relativistic statistical mechanics, and quantum gravity  \cite{CM, Ishii, Marz, FG, CK, KC2, B, PP82}.  In the case that the path of the observer is geodesic, Fermi-Walker coordinates are commonly referred to as Fermi or Fermi normal coordinates.  The metric in that case is Minkowskian to first order near the path, with second order corrections involving only the curvature tensor \cite{MM63}.\\  

\noindent Under general conditions, a timelike path has a neighborhood on which a Fermi-Walker coordinate system can be defined \cite{oniell} (p. 200).  In addition, general formulas in the form of Taylor expansions for coordinate transformations to and from Fermi-Walker coordinates, valid in some neighborhood of a given timelike path in general spacetimes, were given in \cite{KC1}. However, to the best of our knowledge, rigorous results for the radius of a tubular neighborhood of a timelike path, for the domain of Fermi coordinates are not available. In addition to potential applications, it is therefore revealing to find examples where exact coordinate transformations to and from Fermi coordinates can be calculated in order to determine the maximum extent of coordinate charts for those coordinate systems.\\

 \noindent In this paper, we find exact transformations to and from Fermi coordinates for a class of spacetimes. Our starting point is a generic metric given by Eq.\eqref{general} below. In Sect. 2, Theorems \ref{theorem1} and  \ref{theorem2} give explicit charts with Fermi coordinates for metrics of the form of Eq.\eqref{general}.  We use sectional curvature of an appropriate $2$-dimensional submanifold to define Jacobi fields that measure the separation of (Fermi) coordinate, spacelike geodesics.  Our examples, described in Sect. 3, include the metrics for anti-de Sitter spacetime (AdS$_{4}$), de Sitter spacetime (dS$_{4}$), the interior constant density Schwarzschild spacetime with postive, negative, or zero cosmological constant, and the Einstein static universe.  We also discuss the breakdown of Fermi coordinates at the horizon in dS$_{4}$.  Concluding Remarks are given in Sect. 4.\\

\noindent \textbf{{\normalsize 2. Fermi Coordinates and Curvature for a Class of Metrics}}\\

\noindent In a spacetime $M$, let $\sigma(\tau)$ be a a timelike geodesic parameterized by proper time $\tau$ with unit tangent vector $e_{0}(\tau)$.  A Fermi normal coordinate system along $\sigma$ is determined by an orthonormal tetrad of vectors, $e_{0}(\tau)$, $e_{1}(\tau), e_{2}(\tau), e_{3}(\tau)$ parallel along $\sigma$. Fermi coordinates $x^{0}$, $x^{1}$, $x^{2}$, $x^{3}$ relative to this tetrad   are defined by,

\begin{equation}\label{F2}
\begin{split}
x^{0}\left (\exp_{\sigma(\tau)} (\lambda^{j}e_{j}(\tau)\right)&= \tau \\
x^{k}\left (\exp_{\sigma(\tau)} (\lambda^{j}e_{j}(\tau)\right)&= \lambda^{k}, 
\end{split} 
\end{equation}

\noindent where here and below, Greek indices run over $0,1,2,3$ and Latin over $1,2,3$. The exponential map, $\exp_{p}(\vec{v})$, denotes the evaluation at affine parameter $1$ of the geodesic starting at the point $p$ in the spacetime, with initial derivative $\vec{v}$, and it is assumed that the $\lambda^{j}$ are sufficiently small so that the exponential maps in Eq.\eqref{F2} are defined.\\

\noindent Consider a line element of the form,

\begin{align}
\begin{split}\label{general}
ds^{2}=&-\left(1-f(x,y,z)\right)dt^{2}+dx^{2}+dy^{2}+dz^{2}\\ 
&+\left[\left(1- k r^{2}\right)^{-1}-1\right]dr^{2},
\end{split}
\end{align}

\noindent where  $r^{2}=x^{2}+y^{2}+z^{2}$, $k$ is a constant, and $f(x,y,z)$ is a smooth function, which together with its first partial derivatives, vanishes at $x=y=z=0$.  When $f(x,y,z)\equiv 0=k$, Eq.\eqref{general} is the Minkowski metric.  Although not essential, we assume for convenience that $1-f(x,y,z)$ does not vanish when $1- k r^{2} >0$, and that this last expression determines the range of spatial coordinates $(x,y,z)$ for the chart on which the metric is described by Eq.\eqref{general}.\\

\noindent Since all first partial derivatives of the metric elements determined by Eq.\eqref{general} vanish on the timelike path $\sigma(t)=(t,0,0,0)$, it immediately follows that the connection coefficients also vanish on $\sigma(t)$, and that $\sigma(t)$ is a geodesic.  Moreover, $t=\tau$ is proper time, and the following orthonormal tetrad is parallel along along $\sigma(t)$:

\begin{equation}
\begin{split}\label{tetrad}
\frac{\partial}{\partial t}&=e_{0}(\tau)=(1,0,0,0)\\
\frac{\partial}{\partial x}&=e_{1}(\tau)=(0,1,0,0)\\
\frac{\partial}{\partial y}&=e_{2}(\tau)=(0,0,1,0)\\
\frac{\partial}{\partial z}&=e_{3}(\tau)=(0,0,0,1)\\
\end{split}
\end{equation}

\noindent We construct Fermi coordinates for $\sigma(t)=(t,0,0,0)$,  begining with the inverse transformation, from Fermi coordinates $\{x^{0}, x^{1}, x^{2},x^{3}\}$ to Cartesian coordinates $\{t, x,y,z\}$, given by the following theorem.\\

\noindent In what follows, it is convenient to define $a \equiv\sqrt{ |k|}>0$.\\

\begin{theorem}
\label{theorem1}\textup{\textbf{(a)}} When $k>0$, the transformation from Fermi coordinates along $\sigma(t)$ to the coordinates $\{t, x,y,z\}$ is given by,
\begin{align}
t&=x^{0},\label{0+}\\\nonumber\\
x&=x^{1}\left(\frac{\sin(\rho a)}{\rho a}\right)\,,\label{1} \\\nonumber\\
y&=x^{2}\left(\frac{\sin(\rho a)}{\rho a}\right)\,,\label{2} \\\nonumber\\
z&=x^{3}\left(\frac{\sin(\rho a)}{\rho a}\right)\,,\label{3}
\end{align}

\noindent\textup{\textbf{(b)}} When $k<0$, the transformation from Fermi coordinates along $\sigma(t)$ to the coordinates $\{t, x,y,z\}$ is given by,
\begin{align}
t&=x^{0},\label{3+}\\\nonumber\\
x&=x^{1}\left(\frac{\sinh(\rho a)}{\rho a}\right)\,,\label{4} \\\nonumber\\
y&=x^{2}\left(\frac{\sinh(\rho a)}{\rho a}\right)\,,\label{5} \\\nonumber\\
z&=x^{3}\left(\frac{\sinh(\rho a)}{\rho a}\right)\,,\label{6}
\end{align}

\noindent where $\rho^{2}= (x^{1})^{2}+(x^{2})^{2}+(x^{3})^{2}$.
\end{theorem}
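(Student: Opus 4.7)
My plan is to exploit the fact that the metric (\ref{general}) has no $dt$--spatial cross terms and is independent of $t$, so $\partial_{t}$ is a Killing vector, and to combine this with the fact that the initial velocity $\lambda^{j}e_{j}(\tau)$ of every spacelike geodesic involved in (\ref{F2}) has vanishing $t$-component by (\ref{tetrad}). The Killing energy along such a geodesic is $E=-(1-f)\,\dot{t}$; it vanishes initially, hence vanishes identically, and under the stated assumption $1-f\neq 0$ on the chart this forces $\dot{t}\equiv 0$. I conclude that each geodesic $\lambda\mapsto\exp_{\sigma(\tau)}(\lambda^{j}e_{j}(\tau))$ stays inside the hypersurface $t=\tau$, giving (\ref{0+}) and (\ref{3+}) and reducing the whole problem to geodesics of the induced spatial metric.

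The next step is to recast the spatial part of (\ref{general}) in spherical form. Using $r\,dr=x\,dx+y\,dy+z\,dz$ together with $dx^{2}+dy^{2}+dz^{2}=dr^{2}+r^{2}\,d\Omega^{2}$, the bracketed term in (\ref{general}) combines with $dr^{2}$ and collapses the spatial metric into the standard constant-curvature form
\[
d\ell^{2}=\frac{dr^{2}}{1-kr^{2}}+r^{2}\,d\Omega^{2},
\]
i.e.\ the three-dimensional Riemannian space of constant sectional curvature $k$. By the rotational symmetry of $d\ell^{2}$ about the origin, radial lines (constant angular part) are geodesics, and the unit-speed geodesic issuing from $r=0$ in the direction of the Euclidean unit vector $(n^{1},n^{2},n^{3})$ is obtained by solving the single ODE $dr/ds=\sqrt{1-kr^{2}}$ with $r(0)=0$.

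I then integrate this ODE explicitly: for $k=a^{2}>0$ one gets $r(s)=a^{-1}\sin(as)$, and for $k=-a^{2}<0$ one gets $r(s)=a^{-1}\sinh(as)$. Writing the tangent vector as $(\lambda^{1},\lambda^{2},\lambda^{3})=\rho(n^{1},n^{2},n^{3})$ with $\rho=\sqrt{\sum(\lambda^{j})^{2}}$, the terminal point in the $(t,x,y,z)$ coordinates is $(x,y,z)=r(\rho)(n^{1},n^{2},n^{3})=(r(\rho)/\rho)(\lambda^{1},\lambda^{2},\lambda^{3})$, and substituting the two cases yields exactly (\ref{1})--(\ref{3}) and (\ref{4})--(\ref{6}), since by (\ref{F2}) the Fermi coordinates of the terminal point are $x^{k}=\lambda^{k}$.

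The main obstacle, in my view, is not any calculation but rather the first step: one must be certain that the spacelike geodesics used to build the Fermi chart remain inside the slice $t=\tau$, because only then does the four-dimensional problem reduce to a well-understood one on a model space. The conserved-energy argument above makes this rigorous, and it is here that the mild hypothesis $1-f\neq 0$ in the chart is actually used; everything after that is standard constant-curvature geometry and could almost have been guessed by analogy with the sphere $S^{3}$ and hyperbolic space $H^{3}$.
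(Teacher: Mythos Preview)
Your argument is correct and takes a genuinely different route from the paper's. The paper proceeds by \emph{verification}: it assumes the transformation formulas, pushes the metric \eqref{general} forward to obtain the expressions of Corollary~\ref{corollary1}, computes the connection coefficients in the new coordinates, and then checks directly that $\Gamma^{\nu}_{ij}x^{i}x^{j}=0$, i.e.\ that every radial line $s\mapsto(\tau,sa^{1},sa^{2},sa^{3})$ is a geodesic. You instead \emph{derive} the formulas: the Killing--energy argument confines the orthogonal spacelike geodesics to the slice $t=\tau$ (and this is precisely where the hypothesis $1-f\neq 0$ is used), after which the problem reduces to the exponential map at the origin of the constant--curvature $3$--space $dr^{2}/(1-kr^{2})+r^{2}d\Omega^{2}$, and the explicit integration of $dr/ds=\sqrt{1-kr^{2}}$ produces the $\sin/\sinh$ factors. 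Your approach is more conceptual and explains \emph{why} the formulas take this form and why $f$ plays no role in them (the observation recorded in Remark~\ref{indep}); the paper's approach is purely computational but has the practical advantage that the Fermi metric of Corollary~\ref{corollary1} drops out as a by-product of the verification, whereas in your route it would require a separate pull-back calculation.
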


\begin{proof} It follows from Eq.\eqref{F2} that a necessary and sufficient condition for  $\{x^{0},$ $ x^{1},x^{2},x^{3}\}$ to be Fermi coordinates relative to a tetrad $e_{0}(\tau)$, $e_{1}(\tau)$, $e_{2}(\tau)$, $ e_{3}(\tau)$ along a geodesic $\sigma$ is that in these coordinates,

\begin{equation}
\exp_{\sigma(\tau)} (sa^{j}e_{j}(\tau))= (\tau, sa^{1}, sa^{2}, sa^{3}), 
\end{equation}

\noindent where $s$ measures proper distance and $\sqrt{(a^{1})^{2}+(a^{2})^{2}+(a^{3})^{2}}=1$. Thus, it suffices in our case to prove that $X_{t}(s) \equiv (t, sa^{1}, sa^{2}, sa^{3})$ is geodesic in the coordinates $\{x^{0},$ $ x^{1},x^{2},x^{3}\}$ given by Eqs.\eqref{0+}--\eqref{3} for $k>0$ and \eqref{3+}--\eqref{6} for $k<0$. This is readily verified by using these equations to transform the metric of Eq.\eqref{general}, yielding the results of Corollary \ref{corollary1} below, from which the connection coefficients are determined.  It then follows by direct calculation that,

\begin{equation}\label{geodesic2}
\Gamma^{\nu}_{ij}(t,x^{1},x^{2},x^{3})x^{i}x^{j}=0,
\end{equation}

\noindent which is equivalent to,
 \begin{equation}\label{geodesic}
\frac{\,d^{2}X^{\nu}}{ds^{2}} + \Gamma^{\nu}_{\alpha \beta}\frac{\,dX^{\alpha}}{ds}\frac{\,dX^{\beta}}{ds}
=\Gamma^{\nu}_{ij}(t,sa^{1},sa^{2},sa^{3})a^{i}a^{j}=0.
\end{equation}

\noindent Thus, $X_{t}(s) \equiv (t, sa^{1}, sa^{2}, sa^{3})$ is geodesic for all choices of $(a^{1}, a^{2}, a^{3})$.
\end{proof}

\noindent The following two corollaries follow from Theorem \ref{theorem1} and direct calculation.\\

\begin{corollary}
\label{corollary1} The metric in Fermi coordinates for the observer $\sigma(t)$, \textup{\textbf{(a)}} when $k>0$ is given by,

\begin{align}
g_{00}&=-\left[1-f\left(x^{1}\left[\frac{\sin(\rho a)}{\rho a}\right], x^{2}\left[\frac{\sin(\rho a)}{\rho a}\right],x^{3}\left[\frac{\sin(\rho a)}{\rho a}\right]\right)\right]\label{g00+}, \\\nonumber\\
g_{0i}&=0\,,\label{g0i+} \\\nonumber\\
g_{ij}&=\frac{x^{i}x^{j}}{\rho^{2}}+\frac{\sin^{2}\left( a\rho\right)}{a^{2}\rho^{2}}\left(\delta_{ij}-\frac{x^{i}x^{j}}{\rho^{2}}\right).\label{gij+}
\end{align}

\noindent\textup{\textbf{(b)}} when $k<0$, is given by,
\begin{align}
g_{00}&=-\left[1-f\left(x^{1}\left[\frac{\sinh(\rho a)}{\rho a}\right], x^{2}\left[\frac{\sinh(\rho a)}{\rho a}\right],x^{3}\left[\frac{\sinh(\rho a)}{\rho a}\right]\right)\right]\label{g00-},\\\nonumber\\
g_{0i}&=0\,,\label{g0i-} \\\nonumber\\
g_{ij}&=\frac{x^{i}x^{j}}{\rho^{2}}+\frac{\sinh^{2}\left( a\rho\right)}{a^{2}\rho^{2}}\left(\delta_{ij}-\frac{x^{i}x^{j}}{\rho^{2}}\right).\label{gij-}
\end{align}
\end{corollary}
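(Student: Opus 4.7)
The plan is to obtain Corollary~\ref{corollary1} by direct pullback of Eq.~\eqref{general} under the transformation supplied by Theorem~\ref{theorem1}. I will sketch the argument for part~(a); part~(b) is identical with $\sin\to\sinh$ and $\cos\to\cosh$ throughout.

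To set up, I introduce the shorthand $\phi(\rho)=\sin(\rho a)/(\rho a)$, so that Theorem~\ref{theorem1}(a) reads $t=x^{0}$ and $x^{i}_{\mathrm{old}}=x^{i}\,\phi(\rho)$ for $i=1,2,3$ (using $x^{1}_{\mathrm{old}}=x$, $x^{2}_{\mathrm{old}}=y$, $x^{3}_{\mathrm{old}}=z$), where $\rho^{2}=(x^{1})^{2}+(x^{2})^{2}+(x^{3})^{2}$. Using the chain rule with $\partial\rho/\partial x^{j}=x^{j}/\rho$, I would first record the Jacobian
\begin{equation*}
\frac{\partial x^{i}_{\mathrm{old}}}{\partial x^{j}}=\delta^{i}_{j}\,\phi(\rho)+\phi'(\rho)\,\frac{x^{i}x^{j}}{\rho}.
\end{equation*}
The two identities that drive the calculation are $\phi(\rho)+\rho\phi'(\rho)=\cos(\rho a)$, equivalently $dr/d\rho=\cos(\rho a)$ with $r=\rho\phi(\rho)=\sin(\rho a)/a$, and the resulting $1-kr^{2}=\cos^{2}(\rho a)$.

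Next, I would assemble the three contributions to the pulled-back line element. The $-(1-f)\,dt^{2}$ piece immediately produces $g_{00}$ of Eq.~\eqref{g00+} by substitution of $x^{i}_{\mathrm{old}}=x^{i}\phi(\rho)$ into $f$, and $g_{0i}=0$ is immediate because the spatial map is $x^{0}$-independent. Squaring the Jacobian and decomposing into radial and transverse projectors gives
\begin{equation*}
\sum_{i}(dx^{i}_{\mathrm{old}})^{2}=\phi^{2}\!\left(\delta_{jk}-\frac{x^{j}x^{k}}{\rho^{2}}\right)dx^{j}dx^{k}+\bigl(\phi+\rho\phi'\bigr)^{2}\,\frac{x^{j}x^{k}}{\rho^{2}}\,dx^{j}dx^{k},
\end{equation*}
while the extra non-Euclidean term of Eq.~\eqref{general} contributes
\begin{equation*}
\bigl[(1-kr^{2})^{-1}-1\bigr]dr^{2}=\tan^{2}(\rho a)\cos^{2}(\rho a)\,\frac{x^{j}x^{k}}{\rho^{2}}\,dx^{j}dx^{k}=\sin^{2}(\rho a)\,\frac{x^{j}x^{k}}{\rho^{2}}\,dx^{j}dx^{k}.
\end{equation*}
Summing the last two expressions and invoking $\cos^{2}(\rho a)+\sin^{2}(\rho a)=1$ collapses the coefficient of the radial projector $x^{j}x^{k}/\rho^{2}$ to unity, while the transverse coefficient $\phi^{2}=\sin^{2}(\rho a)/(\rho a)^{2}$ is exactly what is required in Eq.~\eqref{gij+}.

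The main obstacle is the bookkeeping in squaring the Jacobian; the only conceptual point is recognising that the $dr^{2}$ term of Eq.~\eqref{general} is precisely what is needed to complete $\cos^{2}(\rho a)$ in the radial coefficient to $1$. Part~(b) follows by the same scheme with $\phi(\rho)=\sinh(\rho a)/(\rho a)$, $\phi+\rho\phi'=\cosh(\rho a)$, $1-kr^{2}=\cosh^{2}(\rho a)$, and the analogous collapse $\cosh^{2}(\rho a)-\sinh^{2}(\rho a)=1$; after that the algebra is routine.
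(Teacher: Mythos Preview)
Your proposal is correct and is exactly the approach the paper takes: the paper states only that Corollary~\ref{corollary1} ``follows from Theorem~\ref{theorem1} and direct calculation,'' and what you have written is precisely that direct calculation, carried out cleanly via the radial/transverse projector decomposition. Your identification of the key identities $\phi+\rho\phi'=\cos(\rho a)$ and $1-kr^{2}=\cos^{2}(\rho a)$ (and their hyperbolic analogues, including the sign that turns the collapse into $\cosh^{2}-\sinh^{2}=1$) is exactly what makes the computation go through.
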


\begin{corollary}
\label{corollary2} Under the change of spatial coordinates, $x^{1}=\rho \sin\theta \cos\phi$, $x^{2}=\rho \sin\theta \sin\phi$, $x^{3}=\rho \cos\theta$, the Fermi metric given by Corollary \ref{corollary1} \textup{\textbf{(a)}} for $k>0$ becomes,

\begin{equation}\label{polar+}
ds^{2}=g_{00}dt^{2}+d\rho^{2}+\frac{\sin^{2}(a\rho)}{a^{2}}(d\theta^{2} + \sin^{2} \theta \,d\phi^{2}),
\end{equation}

\noindent \textup{\textbf{(b)}} for $k<0$ becomes,

\begin{equation}\label{polar-}
ds^{2}=g_{00}dt^{2}+d\rho^{2}+\frac{\sinh^{2}(a\rho)}{a^{2}}(d\theta^{2} + \sin^{2} \theta \,d\phi^{2}),
\end{equation}

\noindent where $g_{00}$ is given by Eq.\eqref{g00+} for part \textup{(a)}, and \eqref{g00-} for part \textup{(b)}.\\
\end{corollary}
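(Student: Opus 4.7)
The plan is to exploit the fact that the spatial part of the Fermi metric in Corollary \ref{corollary1} is already written as a sum of a radial projector $x^{i}x^{j}/\rho^{2}$ and a transverse projector $\delta_{ij}-x^{i}x^{j}/\rho^{2}$, which are precisely the pieces that decouple cleanly under the spherical substitution. The $g_{00}$ and $g_{0i}$ components involve only $\rho^{2}=(x^{1})^{2}+(x^{2})^{2}+(x^{3})^{2}$, which is invariant under the substitution, so these terms transfer over unchanged and contribute the $g_{00}dt^{2}$ piece of Eqs.\eqref{polar+}--\eqref{polar-}. Hence everything reduces to showing that the spatial line element $g_{ij}dx^{i}dx^{j}$ becomes $d\rho^{2}+[S^{2}(a\rho)/a^{2}](d\theta^{2}+\sin^{2}\theta\,d\phi^{2})$, where $S=\sin$ for part (a) and $S=\sinh$ for part (b).

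The first step is to record two elementary identities. Since $x^{i}=\rho n^{i}$ with $n^{i}n^{i}=1$ and hence $n^{i}dn^{i}=0$, one has $x^{i}\,dx^{i}=\rho\,d\rho$, so
\[
\frac{x^{i}x^{j}}{\rho^{2}}\,dx^{i}dx^{j}=\frac{(x^{i}dx^{i})^{2}}{\rho^{2}}=d\rho^{2}.
\]
Next, $\delta_{ij}\,dx^{i}dx^{j}$ is the flat Euclidean line element, which in the chosen spherical coordinates is the familiar $d\rho^{2}+\rho^{2}(d\theta^{2}+\sin^{2}\theta\,d\phi^{2})$; subtracting the previous identity gives
\[
\Bigl(\delta_{ij}-\frac{x^{i}x^{j}}{\rho^{2}}\Bigr)dx^{i}dx^{j}=\rho^{2}(d\theta^{2}+\sin^{2}\theta\,d\phi^{2}).
\]

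Combining these with the formula for $g_{ij}$ in Corollary \ref{corollary1} yields
\[
g_{ij}\,dx^{i}dx^{j}=d\rho^{2}+\frac{S^{2}(a\rho)}{a^{2}\rho^{2}}\cdot\rho^{2}(d\theta^{2}+\sin^{2}\theta\,d\phi^{2}),
\]
which is exactly the spatial part of Eqs.\eqref{polar+}--\eqref{polar-}, with $S=\sin$ in case (a) and $S=\sinh$ in case (b). Adding the $g_{00}dt^{2}$ term completes the identification.

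There is no real obstacle here: the proof is essentially the observation that the two projectors decompose the Euclidean $\delta_{ij}$ into its radial and transverse pieces, and that the prefactors in $g_{ij}$ then rescale only the transverse (angular) part. The only thing that needs a brief remark is that the same calculation works verbatim in both cases, with $\sin$ replaced by $\sinh$ in the coefficient of the angular sector.
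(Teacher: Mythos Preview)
Your proof is correct and carries out the same direct calculation that the paper leaves to the reader; the projector decomposition is exactly the clean way to do it. One small caveat: your claim that $g_{00}$ ``involves only $\rho^{2}$'' is not true in general, since $f$ in Eq.\eqref{general} need not be radial; the correct reason $g_{00}\,dt^{2}$ carries over unchanged is simply that the coordinate change is purely spatial and $g_{0i}=0$, so $g_{00}$ transforms as a scalar under the substitution.
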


\begin{theorem}
\label{theorem2}\textup{\textbf{(a)}} When $k>0$, the transformation from the coordinates $\{t, x,y,z\}$ to Fermi coordinates along $\sigma(t)$ is given by,

\begin{align}
x^{0}&=t,\\\nonumber\\
x^{1}&=x\left(\frac{\sin^{-1}(r a)}{r a}\right)\,,\label{1'} \\\nonumber\\
x^{2}&=y\left(\frac{\sin^{-1}(r a)}{r a}\right)\,,\label{2'} \\\nonumber\\
x^{3}&=z\left(\frac{\sin^{-1}(r a)}{r a}\right)\,,\label{3'}
\end{align}

\noindent\textup{\textbf{(b)}}  When $k<0$, the transformation from the coordinates $\{t, x,y,z\}$ to Fermi coordinates along $\sigma(t)$ is given by,

\begin{align}
x^{0}&=t,\\\nonumber\\
x^{1}&=x\left(\frac{\sinh^{-1}(r a)}{r a}\right)\,,\label{4'} \\\nonumber\\
x^{2}&=y\left(\frac{\sinh^{-1}(r a)}{r a}\right)\,,\label{5'} \\\nonumber\\
x^{3}&=z\left(\frac{\sinh^{-1}(r a)}{r a}\right)\,.\label{6'}
\end{align}
\end{theorem}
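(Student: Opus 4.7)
The plan is to invert the transformations of Theorem~\ref{theorem1} algebraically. I treat part~(a) in detail, since part~(b) follows by the identical argument with $\sin$ replaced by $\sinh$.

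Squaring Eqs.~\eqref{1}--\eqref{3} and summing gives
\begin{equation*}
r^{2} = x^{2}+y^{2}+z^{2} = \bigl[(x^{1})^{2}+(x^{2})^{2}+(x^{3})^{2}\bigr]\left(\frac{\sin(\rho a)}{\rho a}\right)^{\!2} = \frac{\sin^{2}(\rho a)}{a^{2}},
\end{equation*}
so the scalar relation $ra = \sin(\rho a)$ holds. Taking the principal inverse sine (valid on the domain where the Fermi coordinates of Theorem~\ref{theorem1} are defined), this inverts to $\rho a = \sin^{-1}(ra)$, and hence $\sin(\rho a)/(\rho a) = ra/\sin^{-1}(ra)$. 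Substituting into Eq.~\eqref{1} and solving for $x^{1}$ yields $x^{1} = x\,\sin^{-1}(ra)/(ra)$, with identical expressions for $x^{2}$ and $x^{3}$. Part~(b) proceeds in the same fashion; since $\sinh$ is globally monotonic on $\mathbb{R}$, the inversion $\rho a = \sinh^{-1}(ra)$ requires no branch choice at all.

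The only real subtlety is ensuring that the inversion $ra = \sin(\rho a)$ is unambiguous. Because $\sin^{-1}$ is single-valued with range $[-\pi/2,\pi/2]$, the chart obtained this way extends only up to $\rho a = \pi/2$, i.e.\ to the first zero of $\cos(\rho a)$. I expect this to be the main (mild) obstacle, but it is really a bookkeeping question about the domain of validity that is automatically handled once Theorem~\ref{theorem1} is invoked on its stated tubular neighborhood of $\sigma$; it also foreshadows the maximal-chart analysis later in the paper, including the horizon breakdown for dS$_{4}$.
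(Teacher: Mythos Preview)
Your argument is correct and follows essentially the same route as the paper: square and sum Eqs.~\eqref{1}--\eqref{3} (resp.\ \eqref{4}--\eqref{6}) to obtain $r^{2}=\sin^{2}(\rho a)/a^{2}$ (resp.\ $\sinh^{2}$), invert for $\rho$, and substitute back. Your additional remarks on the branch of $\sin^{-1}$ and the resulting domain restriction $\rho a<\pi/2$ anticipate Remark~\ref{range} and are a welcome elaboration rather than a departure.
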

\noindent where, as above, $r^{2}=x^{2}+y^{2}+z^{2}$.

\begin{proof} To prove part (a), observe that squaring and adding Eqs.\eqref{1},\eqref{2}, and \eqref{3}, gives,

\begin{equation}\label{r2}
r^{2}= \frac{\sin^{2}(\rho a)}{a^{2}}
\end{equation}

\noindent Solving for $\rho$ in the above equation, and then for $x,y$, and $z$ in Eqs.\eqref{1},\eqref{2}, and \eqref{3} yields Eqs \eqref{1'},\eqref{2'}, and \eqref{3'}.  The proof of part (b) using,

\begin{equation}\label{r2-}
r^{2}= \frac{\sinh^{2}(\rho a)}{a^{2}}
\end{equation}

\noindent is similar.
\end{proof}

\begin{remark}
\label{indep}
The independence from the function $f(x,y,z)$ of the coordinate transformations appearing in Theorems \ref{theorem1} and \ref{theorem2} is a consequence of Eq.\eqref{geodesic2} and the assumption that $f(x,y,z)$ and its first partial derivatives vanish on $\sigma$.
\end{remark}

\begin{remark}
\label{range}
Under the assumptions made in the paragraph below Eq.\eqref{general}, it follows from Eqs. \eqref{r2} and \eqref{r2-} that for $k>0$, the domain of the spatial Fermi coordinates may be chosen to include any open set in which $0\leqslant\rho<\pi/2a$, and for $k<0$, $0\leqslant\rho<\infty$.
\end{remark}

\noindent The following corollary will be used to identify a Jacobi field for the congruence of spatial geodesics orthogonal to the Fermi observer's world line.\\

\begin{corollary}\label{curvature}
Let $M$ be a spacetime with metric given by Eq.\eqref{polar+} or \eqref{polar-}. Let $N$ be a $2$-dimensional submanifold of $M$ generated by the Fermi coordinates $t$ and $\rho$ with the angular coordinates held fixed so that the induced metric on $N$ is given by,

\begin{equation}\label{submanifold}
ds^{2}=g_{00}dt^{2}+d\rho^{2}.
\end{equation}

\noindent Then the Gaussian curvature  $K$ of $N$ is given by,

\begin{equation}\label{gauss}
K=\frac{-1}{\sqrt{-g_{00}}}\,\frac{\partial^{2}}{\partial\rho^{2}}\sqrt{-g_{00}}.
\end{equation}
\end{corollary}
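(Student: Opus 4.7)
Since the angular coordinates are held fixed on $N$ and $g_{00}$ as given in Corollary \ref{corollary1} depends on $(x^{1},x^{2},x^{3})$ only through the combination $\rho$ (substituting $x^{1}=\rho\sin\theta\cos\phi$, etc., at fixed $\theta,\phi$), the induced metric on $N$ has the warped-product form
\begin{equation*}
ds^{2}_{N}=-A(\rho)^{2}\,dt^{2}+d\rho^{2},\qquad A(\rho)\equiv\sqrt{-g_{00}}.
\end{equation*}
My plan is to extract the single independent Riemann component of this two-dimensional Lorentzian metric and then apply the 2D identity $R_{\mu\nu\alpha\beta}=K(g_{\mu\alpha}g_{\nu\beta}-g_{\mu\beta}g_{\nu\alpha})$, which holds irrespective of signature, to solve for $K$.

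The steps are direct. First, compute the two nonzero Christoffel symbols, $\Gamma^{\rho}_{tt}=AA_{\rho}$ and $\Gamma^{t}_{t\rho}=A_{\rho}/A$. Second, substitute into the definition of $R^{\rho}{}_{t\rho t}$: the derivative term $\partial_{\rho}\Gamma^{\rho}_{tt}=A_{\rho}^{2}+AA_{\rho\rho}$ combines with the product term $-\Gamma^{\rho}_{tt}\Gamma^{t}_{\rho t}=-A_{\rho}^{2}$ so that the $A_{\rho}^{2}$ contributions cancel, leaving $R^{\rho}{}_{t\rho t}=AA_{\rho\rho}$. Lowering the upper index is trivial since $g_{\rho\rho}=1$, giving $R_{\rho t\rho t}=AA_{\rho\rho}$. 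Third, the 2D identity contracted on the appropriate indices yields $R_{\rho t\rho t}=K\det g_{N}$, and since $\det g_{N}=-A^{2}$, this produces $K=-A_{\rho\rho}/A$, which on unpacking $A=\sqrt{-g_{00}}$ is exactly Eq.\eqref{gauss}.

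The only genuinely delicate point is sign-bookkeeping under the Lorentzian signature: $\det g_{N}$ is negative, and the intermediate cancellation carries the opposite sign from the familiar Riemannian warped-product computation for $d\rho^{2}+A^{2}dt^{2}$. To guard against a slipped sign, I would recompute using Cartan's structure equations in the orthonormal coframe $e^{0}=A\,dt$, $e^{1}=d\rho$. The first equation forces $\omega^{0}{}_{1}=A_{\rho}\,dt$; in two dimensions no further term contributes to the second equation; and $d\omega^{0}{}_{1}=A_{\rho\rho}\,d\rho\wedge dt$ delivers the same value $K=-A_{\rho\rho}/A$, confirming the formula.
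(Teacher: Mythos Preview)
Your argument is correct. The paper's own proof is a one-line appeal to Proposition~44 (p.~81) of O'Neill's \emph{Semi-Riemannian Geometry} together with ``direct calculation''; that proposition packages precisely the warped-product curvature computation you carry out by hand. Your version is more self-contained: you compute the lone independent Riemann component from the Christoffel symbols of $ds_{N}^{2}=-A(\rho)^{2}dt^{2}+d\rho^{2}$ and read off $K$ from the two-dimensional identity $R_{\mu\nu\alpha\beta}=K(g_{\mu\alpha}g_{\nu\beta}-g_{\mu\beta}g_{\nu\alpha})$, with the Cartan-frame cross-check guarding against the Lorentzian sign pitfalls you rightly flag. The paper's route buys brevity by outsourcing the computation to a standard reference; yours makes the derivation explicit and independently verifiable.
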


\begin{proof}  The result follows easily from Proposition 44 (p. 81) of \cite{oniell} and direct calculation.
\end{proof}

\begin{remark}\label{sectional}
In the case that $g_{00}$ is a function of $\rho$ only, it is easy to verify that $N$ (with the induced metric, Eq.\eqref{submanifold}) is totally geodesic in $M$, i.e., the shape tensor vanishes.  Thus, the intrinsic geometry of $N$ coincides with its extrinsic geometry as a submanifold of $M$. In particular, the sectional curvature of $N$ in $M$ is the Gaussian curvature $K$. 
\end{remark}

\noindent We assume now that $g_{00}$ is a function of $\rho$ only, i.e.,

\begin{equation}\label{rho}
g_{00}=g_{00}(\rho)
\end{equation}

\noindent The vector field $\partial/\partial t$ is a variation vector field for the geodesic variation of spacelike geodesics of the form, $X_{t}(\rho)=(t, \rho)$, parameterized in $N$ by $t$. Therefore the Jacobi equation,

\begin{equation}\label{jacobi1}
\nabla_\frac{\partial}{\partial \rho}\nabla_{\frac{\partial}{\partial \rho}}(\partial/\partial t)=R_{\frac{\partial}{\partial t}\frac{\partial}{\partial \rho}}(\partial/\partial \rho),
\end{equation}
 
\noindent is satisfied, where $\nabla$ is the Levi-Civita connection (on either $N$ or $M$) and $R$ is the Riemann curvature operator.  In light of Remark \ref{sectional}, the right side of Eq.\eqref{jacobi1} may be expressed in terms of the Gaussian curvature K, yielding,

\begin{equation}\label{jacobi2}
\nabla_\frac{\partial}{\partial \rho}\nabla_{\frac{\partial}{\partial \rho}}(\partial/\partial t)=-K\, \partial/\partial t.
\end{equation}

\noindent The unit vector $T= \frac{1}{\sqrt{-g_{00}}} \frac{\partial}{\partial t}$ 
is orthogonal to $\partial/\partial \rho$ and thus parallel along the spacelike geodesic $X_{t}(\rho)=(t, \rho)$ (with $t$ fixed).  It follows that,

\begin{equation}
\nabla_\frac{\partial}{\partial \rho}\nabla_{\frac{\partial}{\partial \rho}}(\partial/\partial t)= \nabla_\frac{\partial}{\partial \rho}\nabla_{\frac{\partial}{\partial \rho}}(\sqrt{-g_{00}}\,T)= (\frac{\partial^{2}}{\partial\rho^{2}}\sqrt{-g_{00}})T.
\end{equation}

\noindent Eq.\eqref{jacobi2} then becomes,

\begin{equation}
\left(\frac{\partial^{2}}{\partial\rho^{2}}\sqrt{-g_{00}}+K \sqrt{-g_{00}}\right)T=0,
\end{equation}

\noindent which is equivalent to Eq.\eqref{gauss}.  Thus, $y=(t_{2}-t_{1})\sqrt{-g_{00}}$ is a measure of separation of the spacelike geodesics $X_{t_{1}}(\rho)=(t_{1}, \rho)$ and $X_{t_{2}}(\rho)=(t_{2}, \rho)$ at proper distance $\rho$, and is a solution of the initial value problem,

\begin{equation}
\begin{split}\label{DE1}
\frac{\partial^{2}y}{\partial\rho^{2}}&+K(\rho) y=0\\
y'(0)&=(t_{2}-t_{1})\frac{-g'_{00}(0)}{\sqrt{-g_{00}(0)}}=0\\
y(0)&=(t_{2}-t_{1})\sqrt{-g_{00}(0)}=t_{2}-t_{1},
\end{split}
\end{equation}

\noindent where, in the initial data, we have used the assumptions on $g_{00}$ that immediately follow Eq.\eqref{general}, and for convenience, we assume that $t_{2} > t_{1}$.\\

\noindent The following lemma shows that when the Gaussian curvature on $N$ is nonpositive, there is a natural timelike separation of the spacelike geodesics that define the Fermi space coordinate, which never becomes null.

\begin{lemma}\label{DE2}Let $K(\rho) \leqslant 0$ be continuous  and suppose that $y$ is a solution to the initial value problem, Eqs.\eqref{DE1}.  Then $y$ has no positive roots.
\end{lemma}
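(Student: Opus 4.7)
The plan is a standard comparison/convexity argument using the sign of $K$ together with the specific initial data. Writing the ODE as $y'' = -K(\rho)\,y$, the hypothesis $K \leq 0$ means that wherever $y$ is nonnegative, $y''$ is nonnegative as well, i.e.\ $y$ is convex on such intervals. Combined with the initial conditions $y(0) = t_2 - t_1 > 0$ and $y'(0) = 0$, this convexity should force $y$ to be nondecreasing and hence to stay bounded below by $t_2 - t_1$, which rules out any positive root.

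To make this rigorous, I would argue by contradiction. Suppose $y$ admits a positive root, and set
\[
\rho^{*} = \inf\{\rho > 0 : y(\rho) = 0\}.
\]
By continuity, $\rho^{*} > 0$ (since $y(0) > 0$), $y(\rho^{*}) = 0$, and $y(\rho) > 0$ for every $\rho \in [0,\rho^{*})$. On this interval, $y'' = -K(\rho)\,y \geq 0$, so $y'$ is nondecreasing on $[0,\rho^{*})$. Combined with $y'(0) = 0$, this gives $y'(\rho) \geq 0$ throughout $[0,\rho^{*})$, and so $y$ is itself nondecreasing on $[0,\rho^{*}]$. Hence
\[
0 = y(\rho^{*}) \geq y(0) = t_2 - t_1 > 0,
\]
the desired contradiction.

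The main subtlety, and really the only place where one must be careful, is the passage from $K(\rho) \leq 0$ to the sign information on $y'$ and $y$; the argument relies on restricting attention to the interval $[0,\rho^{*})$ where $y$ is known to be positive, so that $-K(\rho)\,y(\rho) \geq 0$ can actually be invoked. Once this is set up, continuity of $K$ guarantees that the solution of the initial value problem exists and is unique, so the inequalities can be propagated up to the first alleged root. No finer ODE theory (Sturm comparison, Prüfer transforms, etc.) is needed.
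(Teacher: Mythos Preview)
Your proof is correct and follows essentially the same contradiction-via-least-positive-root strategy as the paper. Your formulation is slightly more direct (you integrate the sign information $y''\geq 0$ forward to conclude $y$ is nondecreasing, whereas the paper argues that $y'(\rho_0)\leq 0$ forces $y''<0$ somewhere), but the underlying convexity idea is the same.
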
  

\begin{proof} Suppose to the contrary that $\rho_{0}$ is the least positive root of $y$.  Then $y'(\rho_{0})\leqslant0$.  Since by assumption, $y'(0)=0$,
$y'(\rho)$ must be a decreasing function on some open subinterval of 
$[0,\rho_{0}]$. On that subinterval, $y''(\rho)<0$, which contradicts the assumption on $K$.
\end{proof}

\noindent \textbf{{\normalsize 3. Examples}}\\

\noindent Using the results of the previous section, we find in this section exact expressions for the metrics in Fermi coordinates along particular timelike geodesics in AdS$_{4}$,dS$_{4}$, the interior constant density Schwarzschild spacetime with positive, zero, and negative cosmological constant, and the Einstein static universe.  We also discuss the range of Fermi coordinates together with the Gaussian curvatures of the associated submanifolds (i.e., $N$) described in the previous section.\\

\noindent \textbf{Example 1.}  AdS$_{4}$ and dS$_{4}$ metrics in Fermi coordinates\\

\noindent In static coordinates of dS$_{4}$, or the analog for AdS$_{4}$, the metric is,

\begin{equation}
ds^{2}=-\left(1-\frac{\Lambda r^{2}}{3}\right)dt^{2}+r^{2}(d\theta^{2} + \sin^{2} \theta \,d\phi^{2})+\left(1-\frac{\Lambda r^{2}}{3}\right)^{-1}dr^{2},
\label{0}
\end{equation}

\noindent where the cosmological constant $\Lambda$ is positive in the case of dS$_{4}$, and negative for AdS$_{4}$.  In the case of dS$_{4}$, Eq.\eqref{0}
is singular at the cosmological horizon where $r = \sqrt{3/ \Lambda}$.  The horizon divides spacetime into four regions as may be seen from the Penrose diagram \cite{hawking}. In one of these regions  the timelike Killing vector $\partial/\partial t$ is future-directed, $0\leqslant r < \sqrt{3/ \Lambda}$, and an observer at $r = 0$ is surrounded by the cosmological horizon  at $r = \sqrt{3/ \Lambda}$.  For the case of dS$_{4}$, we consider the Fermi observer at $r=0$ in this causal region.\\ 

\noindent By contrast, when $\Lambda <0$ (for AdS$_{4}$), the range of $r$ is unrestricted, i.e., $0\leqslant r < \infty$.  In either case, Eq.\eqref{0} may be rewritten as,

\begin{align}
\begin{split}\label{0'}
ds^{2}=&-dt^{2}+r^{2}(d\theta^{2} + \sin^{2} \theta \,d\phi^{2})+dr^{2}\\ 
&+\frac{\Lambda r^{2}}{3}c^{2}dt^{2}+\left[\left(1-\frac{\Lambda r^{2}}{3}\right)^{-1}-1\right]dr^{2}
\end{split}
\end{align}

\noindent The first line of Eq.\eqref{0'} is the Minkowski metric in spherical coordinates.  Changing to Cartesian space coordinates $x,y,z$, and identifying $r^{2}=x^{2}+y^{2}+z^{2}$, Eq.\eqref{0'} becomes,

\begin{align}
\begin{split}\label{0''}
ds^{2}=&-\left(1-\frac{\Lambda r^{2}}{3}\right)dt^{2}+dx^{2}+dy^{2}+dz^{2}\\ 
&+\left[\left(1-\frac{\Lambda r^{2}}{3}\right)^{-1}-1\right]dr^{2},
\end{split}
\end{align}

\noindent which has the form of Eq.\eqref{general} with $f(x,y,z)=\Lambda r^{2}/3$, $k=\Lambda/3$.\\ 

\noindent Using Eq.\eqref{r2-}, we find that the Fermi metric for the observer $\sigma(t)=(t,0,0,0)$ in AdS$_{4}$ is,

\begin{equation}\label{ads}
ds^{2}= -\cosh^{2}\left( a\rho\right)dt^{2} + g_{ij}dx^{i}dx^{j},
\end{equation}

\noindent where  $a = \sqrt{|k|}=\sqrt{ |\Lambda|/3}$ and the spatial metric coefficients $g_{ij}$ are given by Eq.\eqref{gij-}. Fermi coordinates $\{x^{0}, x^{1}, x^{2},x^{3}\}$ are global on the covering space for AdS$_{4}$, and consistent with Remark \ref{range}, Eq.\eqref{ads} is valid on the entire spacetime. The associated polar metric given by Corollary \ref{corollary2}, though heretofore not associated with Fermi coordinates, is independently well-known and extant in the literature:

\begin{equation}\label{polar-2}
ds^{2}= -\cosh^{2}\left( a\rho \right) dt^{2}+d\rho^{2}+\frac{\sinh^{2}(a\rho)}{a^{2}}(d\theta^{2} + \sin^{2} \theta \,d\phi^{2}).
\end{equation}

\noindent The Fermi metric for the observer $\sigma(t)=(t,0,0,0)$ in static coordinates in dS$_{4}$ is analogous. Using Eq.\eqref{r2} for $\Lambda>0$,

\begin{equation}\label{ds}
ds^{2}= -\cos^{2}\left( a\rho\right)dt^{2} + g_{ij}dx^{i}dx^{j},
\end{equation}

\noindent where $a = \sqrt{k}=\sqrt{ \Lambda/3}$ and the spatial metric coefficients, $g_{ij}$ are given by Eq.\eqref{gij+}. Consistent with Remark \ref{range}, Fermi coordinates $\{x^{0}, x^{1}, x^{2},x^{3}\}$ cover the region of dS$_{4}$ satisfying $\rho= \sqrt{(x^{1})^{2}+(x^{2})^{2}+(x^{3})^{2}}<\pi/2a$, the same region covered by static coordinates, up to the cosmological horizon.  The associated polar metric given by Corollary \ref{corollary2} is,

\begin{equation}\label{polar+2}
ds^{2}= -\cos^{2}\left( a\rho \right) dt^{2}+d\rho^{2}+\frac{\sin^{2}(a\rho)}{a^{2}}(d\theta^{2} + \sin^{2} \theta \,d\phi^{2}).
\end{equation}

\begin{remark}
\label{CM}
We note that Eqs.\eqref{ds} and \eqref{polar+2} for dS$_{4}$ are not new. Chicone and Mashhoon, starting with a different coordinate system for the de Sitter universe, previously derived Eqs.\eqref{ds} and \eqref{polar+2} in \cite{CM}, and observed that Eq.\eqref{polar+2} appears in de Sitter's original 1917 investigations.  Exact Fermi coordinates for G\"odel spacetime are also given in \cite{CM}.
\end{remark}

\noindent With the notation of Corollary \ref{curvature}, a short calculation shows that the Gaussian curvature $K$ of the submanifold spanned by the Fermi coordinates $t,\rho$ with the angular coordinates held fixed is given by,

\begin{equation}\label{gauss1}
K=\frac{\Lambda}{3},
\end{equation}

\noindent so that $K$ is positive on the submanifold $N$ of dS$_{4}$ and negative on the corresponding submanifold of AdS$_{4}$.\\

\noindent Eqs.\eqref{DE1} apply to these examples, but it is instructive to analyze directly the way in which the Fermi coordinate system breaks down at the horizon of dS$_{4}$, where $\rho =\pi/2a$.  Consider two spacelike geodesics with the same fixed angular coordinates, orthogonal to the Fermi observer's worldline. Without loss of generality we take the angular coordinates to be fixed at $\phi=0$, and $\theta=\pi/2$ and the Fermi time coordinates to be $t_{1}$ and $t_{2}$ with $t_{1}<t_{2}$.  The two spacelike geodesics are then given by,

\begin{equation}\label{horizon1}
X_{i}(\rho)= (t_{i}, \rho, \pi/2, 0)\quad i=1,2.
\end{equation}

\noindent  When $\rho=0$, $X_{1}$ and $X_{2}$ lie on the timelike geodesic path of the Fermi observer.  For $0<\rho_{0}<\pi/2a$, the two spacetime points $X_{1}(\rho_{0})$ and $X_{2}(\rho_{0})$ are the same proper distance $\rho_{0}$ from the Fermi observer's path and are connected to each other by the timelike path,

\begin{equation}\label{horizon2}
\gamma_{\rho_{0}}(t)=(t,\rho_{0},\pi/2,0)\quad \text{for} \quad t_{1}\leqslant t \leqslant t_{2}.
\end{equation}

\noindent The observer following the path $\gamma_{\rho_{0}}(t)$ starts at $X_{1}(\rho_{0})$, waits for the fixed Fermi coordinate time interval, $t_{2}-t_{1}$ (without changing Fermi space coordinates), and then arrives at the spacetime point $X_{2}(\rho_{0})$.\\

\noindent However, the proper time along $\gamma_{\rho_{0}}(t)$ is less than the Fermi time interval by a factor of $\cos(a\rho_{0})$, which decreases to zero as $\rho_{0}\rightarrow \pi/2a$.  Expressed another way, the tangent vector $\partial/\partial t$ of $\gamma_{\rho_{0}}(t)$ becomes null at the horizon, $\rho= \pi/2a$.  Since the metric is Lorentzian, this alone is not enough to conclude that the two spacelike geodesics intersect at $\rho= \pi/2a$.  This is because of the possibility that that $\gamma_{\rho_{0}}(t)$ becomes a lightlike path, but does not degenerate to a single spacetime point.  However, the point $p$ of intersection can be identified via a different coordinate system, such as Kruskal coordinates, used for other purposes in \cite{hawking}. Thus, the Fermi coordinate patch cannot include points in the horizon or beyond.\\

\noindent \textbf{Example 2.}  Fermi coordinates for the Einstein static universe\\

\noindent The metric for the Einstein static universe may be written  (c.f. \cite{tolman}) as,

\begin{equation}\label{Einstein}
ds^{2}=-dt^{2}+r^{2}(d\theta^{2} + \sin^{2} \theta \,d\phi^{2})+\left(1-\frac{r^{2}}{R^{2}}\right)^{-1}dr^{2},
\end{equation}

\noindent where $R$ is a constant that depends on energy density and the cosmological constant.  Topologically, the spacetime is $\mathbb{R} \times S^{3}_{R}$, where $R$ is the radius of the $3$-sphere $S^{3}_{R}$. The same calculation leading to Eq.\eqref{0''} shows that this metric may be rewritten as, 

\begin{equation}\label{Einstein2}
ds^{2}=-dt^{2}+dx^{2}+dy^{2}+dz^{2}\\ 
+\left[\left(1-\frac {r^{2}}{R^{2}}\right)^{-1}-1\right]dr^{2},
\end{equation}

\noindent which has the form of Eq.\eqref{general} with $f(x,y,z) \equiv 0$ and $k=R^{-2}$ (and hence $a = R^{-1}$).  Thus, the Fermi metric for the observer $\sigma(t)=(t,0,0,0)$ is, 

\begin{equation}\label{EinsteinFermi}
ds^{2}= -dt^{2} + g_{ij}dx^{i}dx^{j},
\end{equation}

\noindent where the spatial metric coefficients $g_{ij}$ are given by Eq.\eqref{gij+}. The associated polar metric given by Corollary \ref{corollary2} is,

\begin{equation}\label{Einsteinpolar}
ds^{2}=-dt^{2}+d\rho^{2}+R^{2}\sin^{2}\left(\frac{\rho}{R}\right)(d\theta^{2} + \sin^{2} \theta \,d\phi^{2}),
\end{equation}

\noindent a known form of the metric \cite{exact}.  It follows trivally from Eq.\eqref{gauss} that the curvature $K=0$.  Consistent with Remark \ref{range}, if the range of $r$ in Eq.\eqref{Einstein} is $0\leqslant r<R$, then the corresponding range of the proper distance $\rho$ is given by $0\leqslant\rho< \pi R/2$ in Eqs.\eqref{EinsteinFermi} and \eqref{Einsteinpolar}.  However, as expected for the case that $K\leqslant0$, Fermi coordinates may be extended beyond this range to cover the entire spacetime, with the exception of the pole opposite to the origin or coordinates.  Thus, we may take the range of $\rho$ to be given by $0\leqslant\rho< \pi R$.\\

\noindent \textbf{Example 3.} Fermi coordinates for the interior constant density Schwarzschild spacetime with cosmological constant\\

\noindent The metric for a constant density fluid may be written as,

\begin{equation}\label{IS1}
ds^{2}=-A(r)d\bar{t}^{2}
+B(r)dr^{2}+r^{2}\left(d\theta^{2}+\sin^{2}\theta d\phi^{2}\right),
\end{equation}

\noindent where $M$ is the mass of the spherical fluid, $\Lambda$ is the cosmological constant, $R$ is the radial coordinate for the radius of the fluid and,

\begin{align}
\begin{split}\label{IS2}
A(r)=&\left[\frac{(3-R_{0}^{2}\Lambda)}{2}\sqrt{1-\frac{R^{2}}{R_{0}^{2}}}-\frac{(1-R_{0}^{2}\Lambda)}{2}\sqrt{1-\frac{r^{2}}{R_{0}^{2}}}\right]^{2},\\
B(r)=&\left(1-\frac{r^{2}}{R_{0}^{2}}\right)^{-1}.
\end{split}
\end{align}

\noindent Here,

\begin{equation}
R_{0}^{2}=\frac{3R^{3}}{6M+\Lambda R^{3}}.
\end{equation}

\noindent We assume that $A(r), B(r)$, and $R_{0}$ are all positive for $0\leqslant r \leqslant R$ so that the metric is well-defined. In order to find the metric form of Eq.\eqref{IS1} in Fermi coordinates, we first make a change of variable, $t=\sqrt{A(0)}\,\bar{t}$, with the space coordinates held fixed. Eq.\eqref{IS1} then becomes,

\begin{equation}\label{IS3}
ds^{2}=-\left(1-f(x,y,z)\right)dt^{2}
+B(r)dr^{2}+r^{2}\left(d\theta^{2}+\sin^{2}\theta d\phi^{2}\right),
\end{equation}

\noindent where,

\begin{equation}\label{fschwarz}
f(x,y,z)=1-\frac{A(r)}{A(0)}.
\end{equation}

\noindent The same calculation leading to Eq.\eqref{0''} shows that this metric may be rewritten as, 

\begin{equation}\label{IS4}
ds^{2}=-\left(1-f(x,y,z)\right)dt^{2}+dx^{2}+dy^{2}+dz^{2}\\ 
+\left[B(r)-1\right]dr^{2},
\end{equation}

\noindent which has the form of Eq.\eqref{general} with $k=R_{0}^{-2}>0$. Thus, the Fermi metric for the observer $\sigma(t)=(t,0,0,0)$ is, 

\begin{equation}\label{IS5}
ds^{2}= -\frac{A(r(\rho))}{A(0)}dt^{2} + g_{ij}dx^{i}dx^{j},
\end{equation}

\noindent where the spatial metric coefficients $g_{ij}$ are given by Eq.\eqref{gij+} with $a=1/R_{0}$, and where $r(\rho)^{2}$ is given by Eq.\eqref{r2}. The interval of values for $\rho$ corresponding to $0\leqslant r \leqslant R$ is $0\leqslant\rho\leqslant R_{0}\sin^{-1} (R/R_{0})$.  The associated polar metric given by Corollary \ref{corollary2} is,

\begin{equation}\label{ISpolar}
ds^{2}=-\frac{A(r(\rho))}{A(0)}dt^{2}+d\rho^{2}+\frac{\sin^{2}(a\rho)}{a^{2}}(d\theta^{2} + \sin^{2} \theta \,d\phi^{2}).
\end{equation}\\

\noindent The Gaussian curvature of the submanifold $N$ generated by the Fermi coordinates $t,\rho$, given by Eq.\eqref{gauss}  is,

\begin{equation}\label{gauss2}
K=-\frac{1-R_{0}^{2}\Lambda}{2R_{0}^{2}\sqrt{A(r(\rho))}}\cos(a\rho).
\end{equation}

\noindent It is clear that $K\leqslant0$, and by Lemma \ref{DE2}, orthogonal spacelike geodesics with different Fermi time coordinates remain temporally separated for $\rho\leqslant\pi/2a$.  The restriction of $\rho$ to smaller values, noted above, is a requirement of Buchdahl type inequalities \cite{boehmer, hiscock}. \\

\noindent \textbf{{\normalsize 4. Concluding Remarks}}\\

\noindent Using the results of Sect. 1, we have found Fermi coordinates in Cartesian and polar forms, for natural observers in AdS$_{4}$, dS$_{4}$, the Einstein static universe, and the interior Schwarzschild solution with cosmological constant. A Jacobi field measuring the separation of coordinate spacelike geodesics was described in terms of Gaussian curvature (or sectional curvature) of  $2$-dimensional submanifolds defined in terms of Fermi time and distance.\\

\noindent A breakdown of Fermi coordinates occurs when two or more spacelike 
geodesics, orthogonal to the Fermi observer's worldline $\sigma(\tau)$, and originating from that worldline at two different proper times, intersect at some spacetime point.  This occurs for dS$_{4}$ at the horizon for the Fermi observer.  In the other examples considered here, the charts for Fermi coordinates are global. In the case of the Einstein static universe, Fermi coordinates extend beyond the range of the coordinates used to define the metric given by Eq.\eqref{Einstein}.   We note that it is not difficult to construct additional examples of spacetimes with exact transformation formulas to Fermi coordinates (using Theorems \ref{theorem1} and \ref{theorem2}) by combining these examples so as to obtain Fermi coordinates for Schwarzschild-(anti) de Sitter space with interior constant density fluid.  The Fermi observer in those cases remains for all proper times at the center of the fluid.\\

\end{document}